\relax
\documentclass[letterpaper]{article} 
\usepackage{aaai20}  
\usepackage{times}  
\usepackage{helvet} 
\usepackage{courier}  
\usepackage[hyphens]{url}  
\usepackage{graphicx} 
\usepackage{comment}
\urlstyle{rm} 
\usepackage{graphicx}  
\frenchspacing  
\setlength{\pdfpagewidth}{8.5in}  
\setlength{\pdfpageheight}{11in}  
\usepackage{amsmath,amsthm}
\usepackage{amssymb}
\newcommand{\citet}[1]{\citeauthor{#1} \shortcite{#1}}
\newcommand{\citep}{\cite}

\newtheorem{theorem}{Theorem}

\nocopyright
 \pdfinfo{
/Title (Solving Online Threat Screening Games using Constrained Action Space Reinforcement Learning)
/Author (Sanket Shah, Arunesh Sinha, Pradeep Varakantham, Andrew Perrault, Milind Tambe)
} 
\setlength\titlebox{2.5in} 
\title{Solving Online Threat Screening Games using Constrained\\ Action Space Reinforcement Learning}
\author{Sanket Shah,\textsuperscript{\rm 1} Arunesh Sinha,\textsuperscript{\rm 1} Pradeep Varakantham,\textsuperscript{\rm 1} Andrew Perrault,\textsuperscript{\rm 2} Milind Tambe\textsuperscript{\rm 2}\\ 
\textsuperscript{\rm 1}School of Information Systems,  Singapore Management University,  
\{sankets, aruneshs, pradeepv\}@smu.edu.sg \\ 
\textsuperscript{\rm 2}Harvard University, \{aperrault@g., milind\_tambe@\}harvard.edu
}

\usepackage{tikz}
\begin{document}

\maketitle

\begin{abstract}
Large-scale screening for potential threats with limited resources and capacity for screening is a problem of interest at airports, seaports, and other ports of entry. Adversaries can observe screening procedures and arrive at a time when there will be gaps in screening due to limited resource capacities. To capture this game between ports and adversaries, this problem has been previously represented as a Stackelberg game, referred to as a Threat Screening Game (TSG).  Given the significant complexity associated with solving TSGs and uncertainty in arrivals of customers, existing work has assumed that screenees arrive and are allocated security resources at the beginning of the time window. In practice, screenees such as airport passengers arrive in bursts correlated with flight time and are not bound by fixed time windows. To address this, we propose an online threat screening model in which screening strategy is determined adaptively as a passenger arrives while satisfying a hard bound on acceptable risk of not screening a threat. To solve the online problem with a hard bound on risk, we formulate it as a Reinforcement Learning (RL) problem with constraints on the action space (hard bound on risk). We provide a novel way to efficiently enforce linear inequality constraints on the action output in Deep Reinforcement Learning. We show that our solution allows us to significantly reduce screenee wait time while guaranteeing a bound on risk.
\end{abstract}

\section{Introduction}
Screening for potential threats entering large safety-sensitive establishments (e.g., airports, seaports, museums) using the right subset of available screening methods (e.g., metal detectors, advanced imaging technology, pat-down) is an important defensive activity undertaken by various agencies around the world. However, the sheer scale of the problem at these large establishments with a large number of screenees and various screening methods makes screening in a timely fashion with limited resources quite challenging. For example, average delays for air passengers in Chicago, USA jumped to 2 hours due to higher passenger volume in the summer of 2016~\cite{reuters}. Additionally, intelligent adversaries can exploit any gaps in screening and cause catastrophic damage to these critical establishments. Screening gaps can arise due to the use of a less effective but faster screening method for high-risk passengers, which can happen as the combination of more passengers and limited resources can result in the unavailability of the ``right'' screening method. 

Online resource allocation is a problem of interest in many domains including transportation~\cite{simao2009approximate} -- allocating taxis to customers, emergency response~\cite{maxwell2010approximate} -- allocating ambulances to emergencies, and airports -- allocating terminals to arriving aeroplanes. Threat screening is also an online resource allocation problem, but in the presence of an observing adversary. Given that adversaries can monitor resource allocation strategies and exploit any gaps, we consider robust or risk-averse objectives rather than traditional expected objectives (e.g., expected revenue, expected delay). Hence, game-theoretic models and approaches have been considered for such problems. 

One such model is the Threat Screening Game (TSG) model introduced in \citet{brown2016one}, where a strategic attacker attempts to enter a secure area, while the screener uses teams of limited capacity screening resources with varying efficacy of catching the attacker to screen the screenees.
However, despite enhancements made in subsequent work~\cite{mccarthy2017staying}, this model suffers from a lack of adaptability as screening strategies are fixed for every hour. Further, all versions of the model assume a favourable rate of passenger arrival within each time window such that no screening resource is idle within the hour. We address these shortcomings with a novel online allocation model and a completely new solution approach.

Our \emph{first} contribution is an online version of the threat screening problem, in which the screening strategy is decided adaptively, based on the current queue lengths, as the screenees arrive. We show experimentally that this leads to a much better characterization and optimization of the average delay time faced by screenees at no loss to security risk (measured as attacker utility) compared to past work. Further, while past models have used a weight to balance the risk of missing an attacker and average delay time, we impose a hard bound on risk while simultaneously minimizing delay. We show that, given uncertain and unknown passenger arrivals, the online model can be solved as a Reinforcement Learning (RL) problem with continuous action space where the hard bound on risk translates to hard constraints on the action space. We show mathematically that the choice of hard bound on risk is not different from the one where a weighted defender objective is maximized and we can switch back and forth between these two seemingly different optimization goals. Our mathematical analysis also reveals the game-theoretic nature of this formulation.


Our \emph{second} contribution is a novel method to efficiently impose hard constraints on actions in Deep RL by using what we call $\alpha$-projection. In contrast to prior approaches (see Section \ref{sect:pastrl}), our approach guarantees that the constraint is never violated (even during training) while also being much more scalable in training as well as execution. The main component of our method is an extremely efficient mapping of infeasible actions to the feasible space specified by the constraints.

Finally, our \emph{third} contribution is a set of experiments that reveal why and how prior TSG models fail to handle realistic continuous arrival of passengers in bursts. The experiments also show that our approach achieves the same risk as prior models but improves upon the average delay by 100\% in the best case and 25\% on average.
Overall, the realism of our model coupled with a novel scalable RL solution method makes our approach appealing for practical large scale threat screening problems.

\section{Related Work}
\subsection{TSG and Security Games}\label{sect:pasttsg}
There have been quite a few papers published on various aspects of threat screening games. The early papers~\cite{brown2016one,schlenker2017don} make two stringent assumptions---first, they assume perfect prior knowledge of passenger arrivals (for one hour time windows) and second, they implicitly assume that all passengers are screened within the same window in which they arrive, with no screening resource being idle (thus, delay is not an explicit consideration in this work). 
\includecomment{The second assumption could be clearer. There are actually two assumptions here: 1) Screening must be done in the same window and 2) Passengers arrive at favourable rates. Another point here is that even if they arrive at favourable rates, the assignment might not minimise delay.}
Both these assumptions are unrealistic in practice as, clearly, there is uncertainty in the number of passengers arriving in any time window and passengers arrive in bursts that are correlated with the flight timings. We show how these assumptions result in sub-optimal outcomes in practice. 

Later papers \cite{mccarthy2017staying,mccarthy2018price} attempt to account for uncertainty in arrivals across time windows and relax the second assumption by allowing an overflow of passengers from one time window to next. However, their approach is simply unscalable and hence impractical for real-world application. They show solutions for only up to 15 flights for a full day. Additionally, because the solution goal chosen is inspired from robust optimization, the method tries to find a solution that minimizes risk and delay across any realizable sample, which results in a pessimistic solution. The solution approach also makes the approximation of calculating the worst case from a sample of arrivals and, as a result, cannot guarantee that the solution will bound the true worst-case risk. Moreover, a problematic assumption from earlier work about the passengers arriving at a favourable rate within a time window continues to be assumed in this later work. In this paper, we propose a scalable online model without the restrictive assumptions of past work: we guarantee a bound on the worst-case risk and simultaneously minimize the average delay. Our online model allows for fine-grained adaptivity at the level of each passenger arrival as opposed to the hourly time-window adaptations in past work and also scales up to a large number of flights.


In other applications of security games played over multiple time steps, there has been work in which the defender strategy is a policy for an MDP or a sequence of actions~\cite{delle2014game,bosansky2015combining}. This work assumes that the MDP or game parameters are known beforehand and is hence a planning problem rather than a learning problem.\includecomment{what makes MDP parameters unknown in our problem?} Additionally, there has been work in this space where the double oracle approach has been used in tandem with Deep RL to compute the equilibrium~\cite{wang2019deep,wright2019iterated}. In this work, however, there is only a single constraint on actions (actions sum to one), which is readily enforced using a softmax layer. In this paper, our main contribution to Deep RL is in enforcing multiple arbitrary linear inequality constraints efficiently. There is also theoretical work on solving security games in an extensive form~\cite{letchford2010computing,kroer2018robust,vcerny2018incremental,basilico2009leader} or stochastic game~\cite{letchford2012computing} setting. Again, these assume complete knowledge of the game structure including transition functions whereas we focus on learning aspects. Also, whereas learning in the context of security games has appeared in the literature~\cite{balcan2015commitment,letchford2009learning}, this paper introduces an RL-based approach to threat screening for the first time.

\subsection{Constrained Action-Space RL}\label{sect:pastrl}
Historically, dealing with constraints on the action space in Deep RL has been a challenging task. This is exacerbated by the fact that our action space is continuous. The most common and intuitive technique is to discourage disallowed actions with a penalty. This method does not guarantee that the security risk will be bounded, however, and given that the risk is the worst-case allocation across an episode, the probability and extent of violation increases with scale. Given the adversarial nature of the security problem, this method is unsuitable.

Recently, \citet{pham2018optlayer} have suggested enforcing constraints by projecting any unconstrained point onto the constrained space by solving an optimisation program that minimises the L2 distance and back-propagating through it to train the network \citep{amos2017optnet}. This approach is very time consuming as it requires solving a quadratic program (QP) in the forward pass in every training iteration and, as a result, does not scale to problems with large dimensional action spaces~\cite{amos2017optnet} seen in practical screening problems. 

Our RL approach is similar in spirit to \citet{bhatia2019resource}, which uses a complicated variable-length iterative approximation of the L2 projection to deal with a specific subset of linear constraints faster than \citet{pham2018optlayer}. The type of linear constraints they can handle are constraints on the sum of sets of variables, where these sets must form a hierarchy. In contrast, the approach that we propose can handle arbitrary linear constraints, that is, in which the coefficients take any real values. Moreover, our approach is simple, can be computed in a single step, and is easy to implement as gradients can be computed using automated symbolic differentiation.

\section{MDP Model of TSG}\label{sect:MDP}
Our main departure from past TSG work is that we determine the screening strategy for a passenger when they arrive. Thus, while the model below reuses various notions from past versions of the TSG model, it models the online nature by formulating the problem as a Markov Decision Process (MDP). The MDP treatment considerably simplifies the problem of dealing with passenger arrival uncertainty by incorporating those in the MDP transition model. 

The basic structure of the screening problem stays the same as prior TSG models: every arriving passenger has a category $c \in C$, which is made up of two parts $\left< \theta, \kappa \right>$, where $\theta$ is the part of the category that the attacker cannot control (risk level determined by screener) and $\kappa$ is the part that they can control (which flight to take). The screening resource types comprise the set $R$; for example, $R$ could be \{X-Ray, Metal Detector, Advanced imaging\}. Each resource type $r \in R$ has a rate of screening (called capacity in prior work) given by $f_r$ measured in units of passengers per unit time. The passengers are screened by a team (set) of screening resources where the set of all teams $T$ is given a priori ($T \subset 2^R$). An attacker, apart from choosing a flight, uses an attack method $m$ (e.g., knife or gun) and each team $t$ has an effectiveness (probability) $E_{t,m}$ of detecting attack method $m$. $U_{\kappa,m}^{+}$ is the defender's utility for detecting an attacker with attacker's choice being $\kappa,m$ and $U_{\kappa,m}^{-}$ is the utility of not doing so.
As in previous TSG models, the adversary's utilities are the negation of these values. The defender has a belief about the attacker's uncontrollable category $\theta$ given by $P_\theta$ with $\sum_\theta P_\theta = 1$. 

\subsection{MDP Model}
Next, we describe our MDP formulation, which prescribes an online screening strategy for each arriving passenger. This is unlike past approaches in which the same randomized screening strategy was used for every passenger of a given category that arrived in the same time window.



\begin{itemize}
    \item \textbf{States:} The state at any given point in time is a combination of 4 quantities $\left< c, \xi, h, \tau \right>$. The first, $c$, is the category of the passenger that has arrived for screening and we have to allocate security resources to. The remaining quantities summarise the history and provide information about the current context. $\xi \in \mathbb{R}^{|R|}$ encodes the number of passengers (or part thereof) in the queue for each resource at this current point in time. $h \in \mathbb{Z_+}^{|C|}$ is a summary of the history: it is the number of passengers from every category that have already been screened. $\tau$ is the wall clock time when the passenger arrives.
    \item \textbf{Actions:} An action $\pi_t \in\mathbb{R}^{|T|}$ at time step $t$ is a randomized allocation of the just-arrived passenger to teams. We use the insight that the risk is a function of the policy and does not depend on passenger arrivals to codify the hard bound as risk as constraints on the action space. Only actions with risk less than the specified risk level are allowed. Risk in TSGs is measured as the expected adversary utility, which is the negation of the utility of the defender. This leads to $m$ different inequalities constraints (one for each attack method) on the action stated in terms of defender utility.
    \begin{align}
        P_{\theta} * \big[z_{m}U_{\kappa,m}^{+} + (1 - z_{m})U_{\kappa,m}^{-}\big] & \geq -\psi_\theta \quad \forall m , \label{eq:riskconstraint}\\
        \sum_{t \in T} E_{t, m}\pi_t  = z_m\quad \forall m, \mbox{ and }
        & \sum_{t \in T} \pi_t  = 1 \nonumber
    \end{align}
    Given a marginal policy $\pi$, $z_m$ is the overall probability that one of the teams $t \in T$ will detect an attack of type $m$. The last equality constraint enforces that $\pi_t$ is a probability distribution over teams. In order to explain the first set of inequalities, we first state the defender detection utility explicitly. $U_{\kappa,m} = z_{m}U_{\kappa,m}^{+} + (1 - z_{m})U_{\kappa,m}^{-}$ is the expected utility of the defender if the current passenger is an attacker. The utility $U_\theta = \min_{\kappa,m} U_{\kappa,m}$ is the worst case expected utility when the attacker is one with uncontrollable category $\theta$. $\sum_\theta P_\theta U_\theta$ is the overall defender detection expected utility. We wish to impose a lower bound $-\psi_\theta$ on $P_\theta U_\theta$ which indirectly lower bounds $\sum_\theta P_\theta U_\theta$ (or in other words, upper bounds risk). It is easy to see that the first set of inequalities is $P_\theta*( \min_m U_{\kappa,m}) \geq -\psi_\theta$. Since this inequality is applied for every passenger with uncontrollable category $\theta$, we get $P_\theta*( \min_{\kappa,m} U_{\kappa,m}) \geq -\psi_\theta$, which is nothing but $P_\theta U_\theta \geq -\psi_\theta$. Thus, this guarantees that the overall defender detection expected utility $\sum_\theta P_\theta U_\theta \geq - \sum_\theta \psi_\theta$ (or risk is bounded from above by $\sum_\theta \psi_\theta$). As these inequalities hold for any choice of action by attacker, this guarantees $- \sum_\theta \psi_\theta$ detection utility against a best responding attacker.

    \item \textbf{Transitions:} The transition from $\left< c, \xi, h, \tau \right>$ to $\left< c', \xi', h', \tau' \right>$ can be decomposed into three parts. The first is how the allocation at the previous step and passage of wall clock time since then affects the queues for each resource. Passengers in the resources queues are screened according to the screening rate of a given resource:
    $$\xi'_r = \max(\xi_r - (\tau' - \tau)*f_r, 0) \mbox{ for all } r.$$
    The second part controls $h$:
    $$h'_c = h_c + 1, h'_d = h_c \mbox{ for all } d \neq c.$$
    The final part is determined by passenger arrivals and represents the likelihood of arrival of a passenger of a given type at a given time $P(c', \tau'| h, c, t)$. This is a function of the arrival history, but is unknown, which motivates our use of RL for the problem. 
    \item \textbf{Rewards:} The reward for each time step $t$ is the negative of the expected wait time of the currently arrived passenger. The wait time is determined by the maximum wait time over all resources in the realized team allocation. The wait time for each resource $r$ is determined by the screening rate $f_r$ and the number of passengers $\xi_r$ already in queue for that resource: $\xi_r/f_r$. We use $U_{o,t}$ to denote the delay reward at time $t$ (note $U_{o,t}$ is negative). The value (long term reward) is given by $V_o = \mathbb{E}[(1/N)\sum_{t=1}^N U_{o,t}]$, where $N$ passengers arrive in a day (implicitly conditional on the start state with empty history).
\end{itemize}

\subsection{Relationship to Game Theory}\label{sect:gametheory}
In the above constrained RL problem, the defender learns a policy which is a mixed strategy of the defender (mixed since the allocation at each time step is randomized). The adversary observes this policy and chooses an optimal attack $a$, which is a combination of $\kappa$ and attack method $m$. Thus, this is a Stackelberg game setting, similar to prior models of TSG. There are two components of the defender's value function: (a) the risk of not detecting the adversary, captured in $\sum_\theta P_\theta U_\theta $ and (b) the effect of delay, captured in $V_{o}$. 
The above RL approach solves the following problem $\max_{\pi \in \mathcal{F}_\psi} V_{o}(\pi)$ where $\pi$ represents policies and $\mathcal{F}_\psi = \{\pi ~\vert~ P_\theta U_{\theta}(\pi_t,a) \geq -\psi_\theta \mbox{ for all attacker actions $a$ and all } \theta\}$. Observe that here we explicitly write the arguments for $U_{\theta}$ and $V_o$. In particular, $V_o$ does not depend on the attacker action and the definition of $\mathcal{F}_\psi$ ensures achieving a minimum of $-\sum_\theta \psi_\theta$ detection utility against a \emph{best responding} adversary.

While the RL approach restricts the policy space of the defender via a bound on risk, one may wonder if the defender can achieve higher utility without such a restriction. Another way to view the problem is where the defender optimizes $\sum_\theta P_\theta U_\theta + w*V_o$ over all possible $\pi$ without any restrictions, where $w$ is a constant weight that specifies the relative importance of minimizing risk and average delay time of passengers. While our approach requires the defence agencies to specify acceptable risk level, this other approach requires specifying a trade-off weight $w$ between two completely different types of utilities (risk and delay), which is why we feel the hard bound on risk is more natural. But, in any case
we show a relation between these two approaches that allows us to switch back and forth between them. 

\begin{theorem}
 There exists a $\psi$ (dependent on $w$) such that  any $\pi^* \in \arg\!\max_{\pi \in \mathcal{F}_\psi} V_{o}(\pi)$ is the defender strategy part of a Strong Stackelberg equilibrium of the Stackelberg game defined with defender objective as $\sum_\theta P_\theta U_\theta + w*V_o$.
\end{theorem}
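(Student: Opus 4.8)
The plan is to produce the required $\psi$ from a solution of the weighted (penalized) problem and then close the loop with a short comparison argument. Write $G_w(\pi) := \sum_\theta P_\theta U_\theta(\pi) + w\,V_o(\pi)$ for the weighted defender objective, where $U_\theta(\pi)=\min_{\kappa,m} U_{\kappa,m}$ is the worst-case detection utility, which already folds in the attacker's best response. First I would record the observation that the defender's payoff in a Strong Stackelberg equilibrium (SSE) of the weighted game equals $G_w(\pi)$ exactly: $V_o$ does not depend on the attacker's action (as the model states), and the detection subgame is zero-sum, so every attacker best response realizes the same value $U_\theta(\pi)$ and the Stackelberg tie-breaking rule has nothing to decide. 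Hence ``defender part of an SSE'' is the same as ``maximizer of $G_w$ over all $\pi$'', and it suffices to prove that every $\pi^* \in \arg\max_{\pi\in\mathcal{F}_\psi} V_o(\pi)$ also maximizes $G_w$.

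Then I would fix any maximizer $\pi_w$ of $G_w$ over all policies and set the risk bound to match it: $\psi_\theta := -P_\theta U_\theta(\pi_w)$ for every $\theta$. By construction $P_\theta U_\theta(\pi_w) = -\psi_\theta$, so $\pi_w \in \mathcal{F}_\psi$; the feasible region of the constrained problem therefore contains the weighted optimum. Now take any $\pi^* \in \arg\max_{\pi\in\mathcal{F}_\psi} V_o(\pi)$. Feasibility of $\pi^*$ gives $P_\theta U_\theta(\pi^*) \ge -\psi_\theta = P_\theta U_\theta(\pi_w)$ for every $\theta$, and summing over $\theta$ yields $\sum_\theta P_\theta U_\theta(\pi^*) \ge \sum_\theta P_\theta U_\theta(\pi_w)$; optimality of $\pi^*$ inside $\mathcal{F}_\psi$ together with $\pi_w\in\mathcal{F}_\psi$ gives $V_o(\pi^*) \ge V_o(\pi_w)$. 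Adding the two inequalities (using $w\ge 0$) gives $G_w(\pi^*) \ge G_w(\pi_w) = \max_\pi G_w(\pi)$, so $\pi^*$ is itself a maximizer of $G_w$ and hence the defender strategy part of an SSE, as claimed.

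The two-line inequality chain is the routine part; the care is all in the structural prerequisites. The hard part will be establishing that a maximizer $\pi_w$ of $G_w$ exists and that it genuinely corresponds to an SSE defender strategy. For existence I would argue that the set of feasible marginal policies is compact and that $U_\theta$ and $V_o$ are continuous in $\pi$, so $G_w$ attains its maximum, and then invoke the standard existence of Strong Stackelberg equilibria in such commitment games. The subtle conceptual step is the equivalence between the SSE defender value and $G_w(\pi)$: because the detection game is zero-sum and $V_o$ is attacker-independent, the attacker's indifference among best responses never alters the defender's realized payoff, so the worst-case (min over attacker actions) form appearing in both $G_w$ and in the definition of $\mathcal{F}_\psi$ coincides with the equilibrium payoff and the tie-breaking rule is vacuous. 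Throughout, only $w\ge 0$ is used, so the argument covers the full range of trade-off weights.
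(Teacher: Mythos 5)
Your proposal is correct and follows essentially the same route as the paper's proof: both start from a maximizer $\pi_w$ of the weighted objective (whose existence follows from compactness and continuity of $\sum_\theta P_\theta U_\theta + w V_o$ in $\pi$), define $\psi_\theta = -P_\theta U_\theta(\pi_w)$ so that $\pi_w \in \mathcal{F}_\psi$, and then combine the two inequalities (feasibility gives the risk term at $\pi^*$ is no worse, and optimality of $\pi^*$ over a set containing $\pi_w$ gives $V_o(\pi^*) \ge V_o(\pi_w)$) to conclude $\pi^*$ also attains the global maximum of the weighted objective. Your write-up is in fact somewhat more careful than the paper's, explicitly flagging the need for $w \ge 0$ and the vacuousness of SSE tie-breaking, but the underlying argument is identical.
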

\begin{proof}
An SSE is one which maximizes $\sum_\theta P_\theta U_\theta + w*V_o$, subject to the best response of the attacker. The definition $U_\theta = \min_{\kappa,m} U_{\kappa,m}$ already takes care of the best response of the attacker as the attacker utility is $-U_{\kappa,m}$ and the attacker action $\kappa,m$ that  minimizes $U_{\kappa,m}$ maximizes $-U_{\kappa,m}$. Also, as $U_{\kappa,m}$ is continuous in $\pi$ and min of continuous functions is continuous, $\sum_\theta P_\theta U_\theta$ is continuous in $\pi$.

The space of possible $\pi$ is compact, thus, the continuous bounded function $\sum_\theta P_\theta U_\theta + w*V_o$ of $\pi$ achieves a maximum at some $\pi^* \in \Pi^*$. This set $\Pi^*$ is the set of defender strategies that form a SSE of the game. Let the value $U^*_\theta$ and $V_o^*$ be obtained at this $\pi^*$. 

Consider the values $-\psi^*_\theta = P_\theta U^*_\theta$ and $\psi^* = \langle \psi^*_\theta \rangle_{\theta \in \Theta}$.  Note that  $\mathcal{F}_\psi$ is specified by linear inequalities given by Equation~\ref{eq:riskconstraint}, thus, $\mathcal{F}_\psi$ is a polytope.  Also, $\pi^* \in \mathcal{F}_{\psi^*}$. 
We claim that the optimal solution of $\max_{\pi \in \mathcal{F}_{\psi^*}} V_{o}(\pi)$ is in $\Pi^* \cap \mathcal{F}_{\psi^*}$, which is not empty as $\pi^* \in \Pi^* \cap \mathcal{F}_{\psi^*}$. 
As $\sum_\theta P_\theta U_\theta^* + w*V_o^*$ is the global maximum, for any $\pi \in \mathcal{F}_{\psi^*}$ if $\sum_\theta P_\theta U_\theta + V_o  = \sum_\theta P_\theta U^*_\theta + V_o^*$ then $\pi \in \Pi^*$. And also, there does not exist any $\pi \in \mathcal{F}_{\psi^*}$ such that $\sum_\theta P_\theta U_\theta + V_o  > \sum_\theta P_\theta U^*_\theta + V_o^*$, which proves our claim.
Since the optimal solution is in $\Pi^*$, this proves our result for $\psi^*$.
\end{proof}

The above theorem also provides an easy algorithm to solve for an approximate SSE in the unrestricted game using the RL approach. The approach is to construct a Pareto frontier a priori by solving for the optimal policy for many values of $\psi$ where these values are uniformly spaced and distributed throughout the possible space of $\psi$ values. Then, when given $w$, the solution will choose one of the specific points for which the output $\pi$ maximizes $\sum_\theta P_\theta U_\theta + w*V_o$ over all points considered in the $\psi$ space.


\section{Overall Solution}
To solve the screening problem modelled in Section \ref{sect:MDP}, we use Reinforcement Learning (RL). We use techniques from RL instead of trying to solve the MDP directly because the exact passenger arrival distribution is unknown. Rather than trying to model the distribution explicitly, we use model-free RL techniques to jointly learn the distribution and the optimal policy.

Specifically, we use the Deep Deterministic Policy Gradient (DDPG)~\citep{lillicrap2015continuous} algorithm that is a state-of-the-art technique in Deep Reinforcement Learning literature. DDPG is an extension of the standard actor-critic approach that allows the modelling of a continuous action space like the one present in our problem. However, DDPG cannot enforce general action-space constraints as is. 

To deal with this, we propose an $\alpha$-projection layer in Section \ref{sect:alphaproj} that enforces constraints on the output of the previous layer. We then modify the standard DDPG algorithm by adding this $\alpha$-projection layer on top of the output layer of the actor network. This ensures that any action produced by the actor satisfies the risk constraints on the action space. This combination of DDPG and $\alpha$-projection represents the overall approach that what we use to solve the MDP.

\section{Approach for Linear Constraints on Action Space in Deep RL}

All prior approaches for imposing \emph{hard} constraints on the action output of any policy neural network use a layer(s) at the end of the network to map the unconstrained output from intermediate layers to an output in the feasible space. Mathematically, suppose the output must lie in a feasible space $Y$ defined by linear inequality constraints. Let $f(x)$ be the output of the intermediate layer, given input $x$. The last layer(s) define a mapping $M$ such that $M(f(x))$ lies in the feasible space $Y$. For our problem, $Y$ is a fixed polytope for a given problem instance (see Equation~\ref{eq:riskconstraint}). Typically, such mappings $M$ have been some type of $L_p$ projection ($p = 1 \text{ or } 2$) in the past. This projection is written as an optimization problem and enforced as a neural network layer using techniques such as OptLayer~\cite{pham2018optlayer}. However, such mappings are expensive to compute in practice as they require solving a quadratic program for every training iteration and every execution. This creates the case for a simpler mapping. First, we list desirable properties of such a mapping.

\begin{itemize}
    \item Onto: To make sure that the neural network has the opportunity to output any value in the feasible space. This ensures that there is no loss in solution quality arising from a restricted solution space.
    \item Continuous everywhere and differentiable almost everywhere: This allows the neural network to learn through this mapping layer by backpropagating gradients.\footnote{Non-differentiability for points in a measure zero set are allowed, as is in the ReLU activation.} \item No vanishing or exploding gradients: The gradients of the mapping should be informative when optimizing loss, that is, should not be zero or very large for many points. 
\end{itemize}
Observe that the mapping $M$ need not be a closest point (in any distance) in $Y$ to the unconstrained $f(x)$. This is because the whole neural network is the function composition $M \circ f$ and the training ensures that the output $M \circ f$ minimizes the loss. The purpose of $M$ is to only ensure feasibility of output, thus, any choice of $M$ (with the properties above) will work since the neural network will appropriately adjust $f$ so that $M \circ f$ is optimal.

\begin{figure}
\tikzset{every picture/.style={line width=0.75pt}} 
\centering
\begin{tikzpicture}[x=0.75pt,y=0.75pt,yscale=-0.45,xscale=0.45]

\draw  [color={rgb, 255:red, 126; green, 211; blue, 33 }  ,draw opacity=1 ][fill={rgb, 255:red, 126; green, 211; blue, 33 }  ,fill opacity=1 ] (551.35,159.5) .. controls (551.23,165.13) and (546.33,169.61) .. (540.41,169.49) .. controls (534.49,169.38) and (529.78,164.72) .. (529.9,159.09) .. controls (530.01,153.45) and (534.91,148.98) .. (540.83,149.09) .. controls (546.76,149.21) and (551.46,153.86) .. (551.35,159.5) -- cycle ;
\draw  [color={rgb, 255:red, 208; green, 2; blue, 27 }  ,draw opacity=1 ][fill={rgb, 255:red, 208; green, 2; blue, 27 }  ,fill opacity=1 ] (318.18,218.47) .. controls (318.06,224.1) and (313.17,228.58) .. (307.24,228.46) .. controls (301.32,228.35) and (296.61,223.69) .. (296.73,218.06) .. controls (296.85,212.42) and (301.75,207.95) .. (307.67,208.06) .. controls (313.59,208.18) and (318.3,212.83) .. (318.18,218.47) -- cycle ;
\draw  [color={rgb, 255:red, 208; green, 2; blue, 27 }  ,draw opacity=1 ][fill={rgb, 255:red, 208; green, 2; blue, 27 }  ,fill opacity=1 ] (235.69,262.62) .. controls (235.57,268.25) and (230.68,272.73) .. (224.75,272.61) .. controls (218.83,272.5) and (214.12,267.84) .. (214.24,262.21) .. controls (214.36,256.58) and (219.25,252.1) .. (225.18,252.21) .. controls (231.1,252.33) and (235.81,256.99) .. (235.69,262.62) -- cycle ;
\draw  [color={rgb, 255:red, 208; green, 2; blue, 27 }  ,draw opacity=1 ][fill={rgb, 255:red, 208; green, 2; blue, 27 }  ,fill opacity=1 ] (64.37,262.77) .. controls (64.25,268.41) and (59.36,272.88) .. (53.43,272.77) .. controls (47.51,272.65) and (42.8,267.99) .. (42.92,262.36) .. controls (43.04,256.73) and (47.93,252.25) .. (53.86,252.37) .. controls (59.78,252.48) and (64.49,257.14) .. (64.37,262.77) -- cycle ;
\draw  [color={rgb, 255:red, 208; green, 2; blue, 27 }  ,draw opacity=1 ][fill={rgb, 255:red, 208; green, 2; blue, 27 }  ,fill opacity=1 ] (145.71,49.4) .. controls (145.59,55.03) and (140.69,59.5) .. (134.77,59.39) .. controls (128.85,59.28) and (124.14,54.62) .. (124.26,48.98) .. controls (124.37,43.35) and (129.27,38.88) .. (135.19,38.99) .. controls (141.12,39.1) and (145.82,43.76) .. (145.71,49.4) -- cycle ;
\draw   (138.48,275.43) -- (0.53,183.62) -- (50.73,10.8) -- (263.37,41.17) -- (329.38,173.9) -- cycle ;
\draw  [color={rgb, 255:red, 208; green, 2; blue, 27 }  ,draw opacity=1 ][fill={rgb, 255:red, 208; green, 2; blue, 27 }  ,fill opacity=1 ] (337.19,88.5) .. controls (337.07,94.13) and (332.17,98.61) .. (326.25,98.49) .. controls (320.33,98.38) and (315.62,93.72) .. (315.74,88.09) .. controls (315.86,82.46) and (320.75,77.98) .. (326.68,78.09) .. controls (332.6,78.21) and (337.31,82.87) .. (337.19,88.5) -- cycle ;
\draw    (326.46,88.29) -- (292.2,97.93) ;
\draw [shift={(292.2,97.93)}, rotate = 209.29] [color={rgb, 255:red, 0; green, 0; blue, 0 }  ][line width=0.75]    (-7.59,0) -- (7.59,0)(0,7.59) -- (0,-7.59)   ;

\draw  [dash pattern={on 0.84pt off 2.51pt}]  (326.46,88.29) -- (158.11,139.04) ;

\draw  [dash pattern={on 0.84pt off 2.51pt}]  (307.46,218.26) -- (158.11,139.04) ;

\draw  [dash pattern={on 0.84pt off 2.51pt}]  (224.96,262.41) -- (158.11,139.04) ;

\draw  [dash pattern={on 0.84pt off 2.51pt}]  (53.65,262.57) -- (158.11,139.04) ;

\draw  [dash pattern={on 0.84pt off 2.51pt}]  (11.21,60.54) -- (158.11,139.04) ;

\draw  [dash pattern={on 0.84pt off 2.51pt}]  (199.88,10.42) -- (158.11,139.04) ;

\draw    (307.46,218.26) -- (277.24,201.68) ;
\draw [shift={(277.24,201.68)}, rotate = 270] [color={rgb, 255:red, 0; green, 0; blue, 0 }  ][line width=0.75]    (-7.59,0) -- (7.59,0)(0,7.59) -- (0,-7.59)   ;

\draw    (224.96,262.41) -- (210.96,236.35) ;
\draw [shift={(210.96,236.35)}, rotate = 286.76] [color={rgb, 255:red, 0; green, 0; blue, 0 }  ][line width=0.75]    (-7.59,0) -- (7.59,0)(0,7.59) -- (0,-7.59)   ;

\draw    (199.88,10.42) -- (193.35,31.82) ;
\draw [shift={(193.35,31.82)}, rotate = 151.97] [color={rgb, 255:red, 0; green, 0; blue, 0 }  ][line width=0.75]    (-7.59,0) -- (7.59,0)(0,7.59) -- (0,-7.59)   ;

\draw    (53.65,262.57) -- (77.28,233.87) ;
\draw [shift={(77.28,233.87)}, rotate = 354.48] [color={rgb, 255:red, 0; green, 0; blue, 0 }  ][line width=0.75]    (-7.59,0) -- (7.59,0)(0,7.59) -- (0,-7.59)   ;

\draw    (11.21,60.54) -- (32.86,71.63) ;
\draw [shift={(32.86,71.63)}, rotate = 72.11] [color={rgb, 255:red, 0; green, 0; blue, 0 }  ][line width=0.75]    (-7.59,0) -- (7.59,0)(0,7.59) -- (0,-7.59)   ;

\draw    (307.46,218.26) -- (293.03,194.52) ;
\draw [shift={(291.99,192.81)}, rotate = 418.7] [color={rgb, 255:red, 0; green, 0; blue, 0 }  ][line width=0.75]    (10.93,-3.29) .. controls (6.95,-1.4) and (3.31,-0.3) .. (0,0) .. controls (3.31,0.3) and (6.95,1.4) .. (10.93,3.29)   ;

\draw  [dash pattern={on 0.84pt off 2.51pt}]  (134.98,49.19) -- (158.11,139.04) ;

\draw    (134.98,49.19) ;
\draw [shift={(134.98,49.19)}, rotate = 45] [color={rgb, 255:red, 0; green, 0; blue, 0 }  ][line width=0.75]    (-7.59,0) -- (7.59,0)(0,7.59) -- (0,-7.59)   ;

\draw    (580.96,221.24) -- (492.28,185.62) ;
\draw [shift={(492.28,185.62)}, rotate = 270.0] [color={rgb, 255:red, 0; green, 0; blue, 0 }  ][line width=0.75]    (-9.59,0) -- (9.59,0)(0,9.59) -- (0,-9.59)   ;

\draw    (579.05,138.7) -- (518.24,26.37) ;

\draw    (579.05,138.7) -- (422.45,223.62) ;

\draw  [dash pattern={on 4.5pt off 4.5pt}] (354.77,186.21) .. controls (354.13,216.82) and (328.32,241.14) .. (297.11,240.54) .. controls (265.91,239.94) and (241.12,214.65) .. (241.76,184.04) .. controls (242.4,153.43) and (268.21,129.1) .. (299.42,129.7) .. controls (330.62,130.3) and (355.4,155.6) .. (354.77,186.21) -- cycle ;
\draw  [dash pattern={on 4.5pt off 4.5pt}] (650.47,151.57) .. controls (649.18,220.06) and (591.55,274.58) .. (521.74,273.37) .. controls (451.93,272.15) and (396.39,215.64) .. (397.68,147.16) .. controls (398.97,78.68) and (456.61,24.15) .. (526.41,25.37) .. controls (596.22,26.59) and (651.76,83.09) .. (650.47,151.57) -- cycle ;
\draw    (579.8,221.22) -- (541.69,160.98) ;
\draw [shift={(540.62,159.29)}, rotate = 417.68] [color={rgb, 255:red, 0; green, 0; blue, 0 }  ][line width=0.75]    (10.93,-3.29) .. controls (6.95,-1.4) and (3.31,-0.3) .. (0,0) .. controls (3.31,0.3) and (6.95,1.4) .. (10.93,3.29)   ;

\draw  [dash  pattern={on 0.84pt off 2.51pt}]  (580.96,221.24) -- (397.68,147.16) ;

\draw  [dash pattern={on 0.84pt off 2.51pt}]  (621.33,181.93) -- (398.83,116.89) ;

\draw [dash pattern={on 4.5pt off 4.5pt}]  (356.29,194.67) -- (398.05,184.15) ;
\draw [shift={(399.99,183.67)}, rotate = 525.87] [color={rgb, 255:red, 0; green, 0; blue, 0 }  ][line width=0.75]    (10.93,-3.29) .. controls (6.95,-1.4) and (3.31,-0.3) .. (0,0) .. controls (3.31,0.3) and (6.95,1.4) .. (10.93,3.29)   ;

\draw  [color={rgb, 255:red, 208; green, 2; blue, 27 }  ,draw opacity=1 ][fill={rgb, 255:red, 208; green, 2; blue, 27 }  ,fill opacity=1 ] (210.61,10.62) .. controls (210.49,16.26) and (205.59,20.73) .. (199.67,20.62) .. controls (193.75,20.5) and (189.04,15.85) .. (189.16,10.21) .. controls (189.27,4.58) and (194.17,0.1) .. (200.09,0.22) .. controls (206.02,0.33) and (210.72,4.99) .. (210.61,10.62) -- cycle ;
\draw  [color={rgb, 255:red, 208; green, 2; blue, 27 }  ,draw opacity=1 ][fill={rgb, 255:red, 208; green, 2; blue, 27 }  ,fill opacity=1 ] (21.94,60.75) .. controls (21.82,66.38) and (16.92,70.86) .. (11,70.74) .. controls (5.08,70.63) and (0.37,65.97) .. (0.49,60.34) .. controls (0.6,54.7) and (5.5,50.23) .. (11.42,50.34) .. controls (17.35,50.46) and (22.05,55.11) .. (21.94,60.75) -- cycle ;
\draw  [color={rgb, 255:red, 0; green, 0; blue, 0 }  ,draw opacity=1 ][fill={rgb, 255:red, 0; green, 0; blue, 0 }  ,fill opacity=1 ] (168.83,139.25) .. controls (168.72,144.88) and (163.82,149.36) .. (157.9,149.24) .. controls (151.97,149.13) and (147.27,144.47) .. (147.38,138.84) .. controls (147.5,133.2) and (152.4,128.73) .. (158.32,128.84) .. controls (164.24,128.96) and (168.95,133.61) .. (168.83,139.25) -- cycle ;

\draw (95,145) node [rotate=-0.0] [align=left] {Center};
\draw (650,155) node [fill=white,rotate=-0.0] [align=left] {P1'};
\draw (558,250) node [fill=white,rotate=-0.0] [align=left] {P1};
\draw (285,250) node [fill=white,rotate=-0.0] [align=left] {P1};
\draw (530.28,139.84) node [outer sep=2pt,fill=white,above,rotate=-0.0] [align=left] {Optimal};

\draw  [color={rgb, 255:red, 208; green, 2; blue, 27 }  ,draw opacity=1 ][fill={rgb, 255:red, 208; green, 2; blue, 27 }  ,fill opacity=1 ] (632.05,182.14) .. controls (631.93,187.77) and (627.04,192.24) .. (621.11,192.13) .. controls (615.19,192.02) and (610.48,187.36) .. (610.6,181.73) .. controls (610.72,176.09) and (615.61,171.62) .. (621.54,171.73) .. controls (627.46,171.84) and (632.17,176.5) .. (632.05,182.14) -- cycle ;
\draw  [color={rgb, 255:red, 208; green, 2; blue, 27 }  ,draw opacity=1 ][fill={rgb, 255:red, 208; green, 2; blue, 27 }  ,fill opacity=1 ] (591.69,221.45) .. controls (591.57,227.08) and (586.67,231.55) .. (580.75,231.44) .. controls (574.83,231.33) and (570.12,226.67) .. (570.24,221.03) .. controls (570.35,215.4) and (575.25,210.93) .. (581.17,211.04) .. controls (587.1,211.15) and (591.8,215.81) .. (591.69,221.45) -- cycle ;
\end{tikzpicture}
\caption{\textbf{Left}: The red points are produced by an intermediate layer. The $\alpha$-projection is shown with cross marks. If the red point is inside the feasible region, the $\alpha$-projection of the point is the point itself. For one red point (inside the small dashed circle), the L2 projection is shown with the arrow. \textbf{Right}: A zoomed-in version of the dashed circle. This shows that if the L2 projection of P1 is the optimal point (green), then the intermediate layer will adjust its output to P1' to get the optimal point with the $\alpha$-projection.} \label{fig:alphaexplain}
\end{figure}
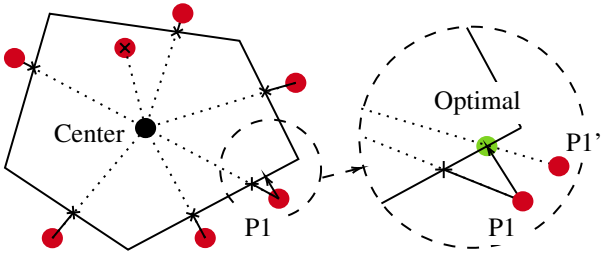


\subsection{$\alpha$-projection}\label{sect:alphaproj}
We propose a very simple mapping $M$, which we call $\alpha$-projection. We start by finding a feasible point $y_0 \in int(Y)$, where $int(Y)$ are all the interior points of $Y$. Then, we find the maximum $\alpha \in [0,1]$ such that $y = \alpha *f(x) + (1-\alpha)*y_0$ and $y \in Y$, with the final output being the point $y$, so $M(f(x)) = y$. Intuitively, we join $f(x)$ and $y_0$ with a line and choose the closest point to $f(x)$ on this line that lies in $Y$, which could be $f(x)$ itself if $f(x) \in Y$. See Figure~\ref{fig:alphaexplain}.

\textbf{Forward pass} $\alpha$-projection turns out to be very efficient for linear constraints such as those in the TSG-RL problem. For training neural networks, every iteration requires a forward pass for the network also. An optimization layer (such as our $\alpha$-projection layer) requires solving an optimization problem. However, our simple mapping allows obtaining $\alpha$ in a closed-form with no need to solve expensive optimizations. To get to the closed-form, first, lets consider $m$ inequality constraint with the $i^{th}$ constraint being $a_i \cdot y \leq b_i$. Using our mapping, this $i^{th}$ constraint is $ \alpha *( a_i \cdot f(x)) +  (1-\alpha)*(a_i \cdot y_0) \leq b_i$. Thus, $\alpha * (a_i \cdot f(x) - a_i \cdot y_0) \leq b - a_i \cdot y_0 $. We get a similar upper (or lower, depending on the sign of $a_i \cdot f(x) - a_i \cdot y_0$) bound for $\alpha$ for every inequality constraint and then $\alpha$ is the simply the highest value in $[0,1]$ that satisfies all these bounds. All these bounds are closed-form formulas, thus, computing $\alpha$ is very efficient for the forward pass, and obtaining $y = \alpha*f(x) + (1-\alpha)*y_0$ is easy.

\textbf{Gradients}: Computing the gradient (for backpropagation) of $\alpha$ w.r.t. the input $f(x)$ to the mapping layer is also easy. For readability, we write $s$ instead of $f(x)$. As stated in the previous paragraph $\alpha$ is the minimum of a number of upper bounds (ignroing lower bounds), where these upper bounds are given closed-form functions $b_i(s)$ ($y_0$ is a constant). Thus, $\alpha = \min(1,b_1(s), \ldots, b_k(s))$ for some $k < m$. For notational ease, let $b_0(s) = 1$. For any specific $s_0$, if there is a unique index $j$ for which $\alpha = b_j(s)$ then the gradient is simply $\nabla b_j(s)$, which is $0$ if $j=0$. If there is a set of indices $J$ ($|J|>1$) and $\alpha = b_j(s)$ for all $j \in J$, then the gradient is simply $(1/|J|)*(\sum_{j\in J}\nabla b_j(s))$. In practice, these gradients need not be explicitly calculated and can be handled by automatic symbolic differentiation libraries~\cite{tensorflow2015-whitepaper} instead. Thus, our approach is also simple to implement.

\textbf{Handling equality constraints}: For equality constraints, say $k$ constraints, the general approach would be to eliminate $k$ variables using Gaussian elimination (or any other method) and then deal with the only inequalities in this new space. However, for our TSG problem, we only have one equality constraint, which is a probability simplex constraint that can be easily enforced by a softmax layer. With a slight abuse of notation, we use $s$ to denote the output of the softmax whose input is the unconstrained output $f(x)$. $s$ satisfies the probability simplex constraint. Additionally, we choose $y_0$ such that it also satisfies the probability simplex constraint. Then, the final output $\alpha*s + (1-\alpha)*y_0$ satisfies the probability simplex constraint and all the inequality constraints. Thus, for our problem, the overall mapping is made of 2 layers: a softmax followed by the $\alpha$-projection layer.



\subsection{Choosing an Interior Point}
The choice of $y_0$ is important. First, $y_0$ should be an interior point. Otherwise, if $y_0$ is on an external face (hyperplane) of the $Y$ polytope, all external points on the side of the face not containing the polytope will map to $y_0$. This violates non-zero gradients property of a feasible mapping, as all points on one side of the hyperplane will have zero gradients. We also find that we get better performance when $y_0$ is near the centre of the polytope. While there are many different types of centres, we choose the Chebyshev centre of a polytope because: (1) Chebyshev centre can be computed efficiently by solving a linear program and (b) the Chebyshev centre maximizes the minimum distance from the faces of the polytope, that is, making sure the centre is far from bad points. Informally, Chebyshev centre is the centre of the largest ball that fits inside the polytope. Note that we need to compute the Chebyshev centre only once for our polytope $Y$ and that this computation time is of the order of seconds.




\section{Experiments}
In line with past work on TSGs, we evaluate the performance of our approach on the airport passenger screening domain. 
For the most simple head-to-head comparison, we look at the difference in solution quality between our approach and past work within single time-window.  \citet{brown2016one} and \citet{mccarthy2017staying} both have the same optimal solution in this case and the optimal marginal solution can be found by using a simple linear program (LP). When we compare the solutions of the LP to our approach, we control for the risk and measure the corresponding difference in delay. Specifically, we take the different risk levels associated with the uncontrollable categories $\psi_\theta$ from the solution of the LP and run our approach using those as the risk threshold in the risk constraints of our approach. We then test both sets of policies (LP and ours) using an online simulator and compare the ratio of average delays obtained. 

We construct our problem instances using the description in~\citet{brown2016one} and~\citet{mccarthy2017staying}. The attacker utility associated with successfully launching an attack $U^{+}$ is sampled from a uniform distribution over [1, 10], while the utility of failing to launch an attack $U^-$ is set to 0. The game is zero-sum and, as a result, the defender utilities are the negation of the attacker utilities. There are 3 attack methods $m$, 5 uncontrollable screenee risk levels $\theta$ and 5 screening resource types $|R|$. The efficacies (probability of detection) of different resources are sampled from a uniform random distribution over [0, 1] for each attack method. We create 10 random 2-sized combinations of resources to represent the teams $T$ and their efficacies are found assuming that the efficacies of their associated resources are independent. We choose a passenger arrival distribution as used in \citet{mccarthy2017staying}, and consider arrivals to be normally distributed in a 3-hour window leading up to the departure of the flight. We combine this with real flight departure times taken from one of the busiest airports in the world to generate a realistic arrival distribution of passengers. The default number of flight types for experiments is 10.

Finally, for runtime, all past methods have used non-gradient based optimization methods and have reported runtimes for programs that have run on CPUs. As has been observed in other domains, GPUs offer a huge advantage due to the immense parallelization of matrix operations for neural network optimization. However, to perform a fair comparison to past work, we run all our experiments on a CPU. Thus, while our scalability results show the runtime trend with increasing problem size, the absolute wall clock time can be much better with GPUs.

\subsection{Static vs. Adaptive}
\begin{figure}
    \centering
    \includegraphics[ width=\hsize]{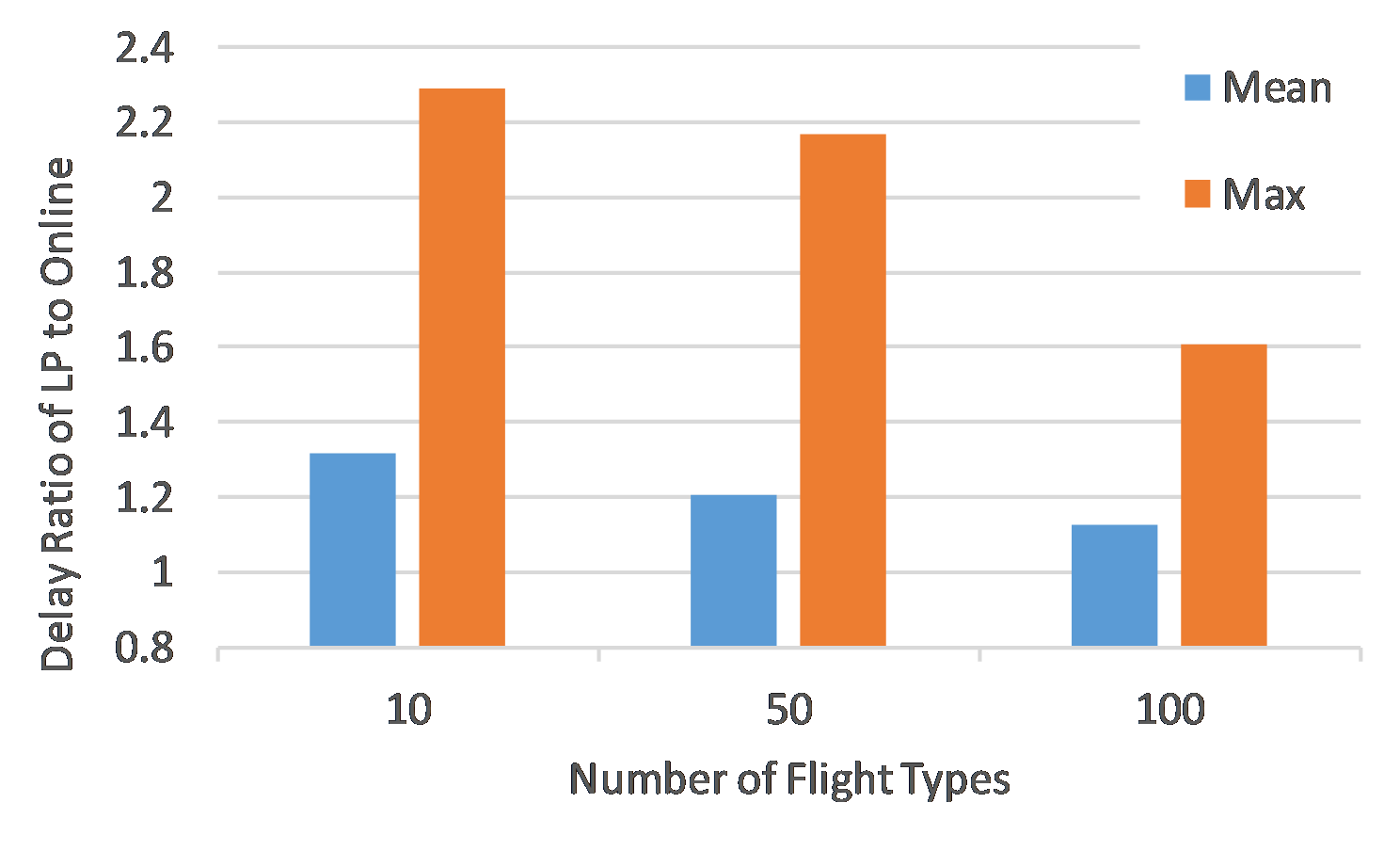}
    \caption{\textbf{Static vs. Adaptive:} We measure the ratio of average delays of the LP to average delays of our online approach (so higher is better) across different problem sizes.}
    \label{fig:staticvsadaptive}
\end{figure}
First, we show that our online approach outperforms past window-based approaches even when the problem size is large. The experiments here are evaluated for 30 random game instances and averaged across 100 samples of passenger arrival sequence. The results can be seen in Figure \ref{fig:staticvsadaptive}. 

For the one time-window problem, improvement in solution quality comes from the fact that past work has a static policy within one time-window, whereas our solution can adapt based on the actual number of passenger arrivals. As a result, our approach can exploit the structure present within a time-window. The reason our improvement decreases with an increase in the number of flights is because, as the problem size increases, the structure present in a randomly generated problem decreases. For example, with many overlapping Gaussian distribution of passenger arrivals, the overall arrival is almost uniform which we show leads to a lower gain (see Section~\ref{sec:var}).

Moreover, the performance improvement within one time-window is a lower bound on the amount of improvement we can achieve over past methods. The solution quality associated with past methods deteriorates with an increase in the number of windows (details in Section \ref{sect:pasttsg}) while there is no notion of time windows in our model and hence no degradation over long periods.

\subsection{Delay vs. Risk}
\begin{figure}
    \centering
    \includegraphics[ width=\hsize]{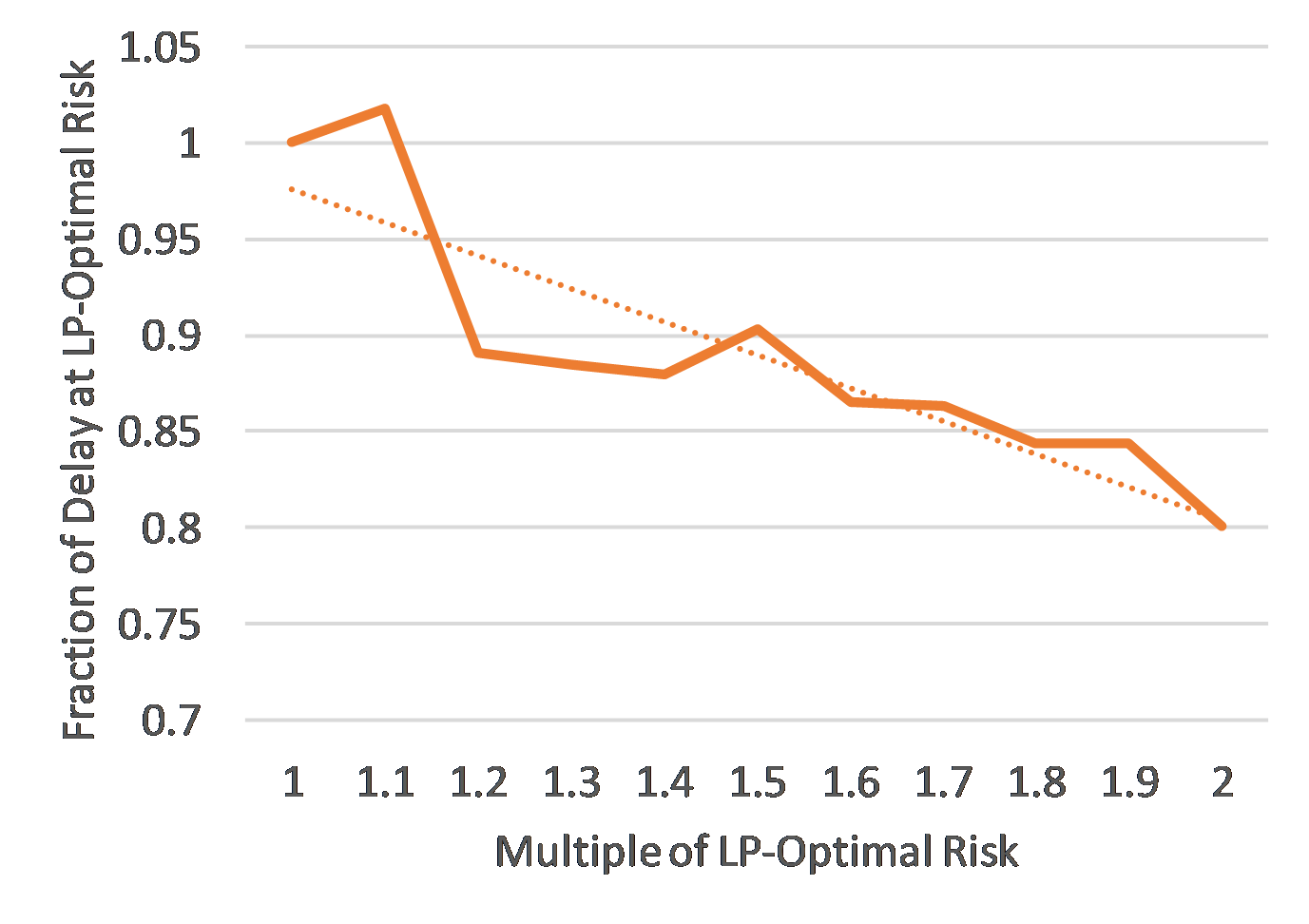}
    \caption{\textbf{Delay vs. Risk}: We measure how the delay decreases (as fraction of delay at LP optimal risk) with increasing risk allowance.}
    \label{fig:delayvsrisk}
\end{figure}
In Figure \ref{fig:delayvsrisk}, we show the inherent trade-off between risk and average delay by varying $\psi$ (keeping the ratio of $\psi_\theta$ fixed across $\theta$) and measuring its effect on the delay. To do this, we take the optimal risk level obtained by solving the LP and then measure the impact on average delay as we relax it. The results show average delay as a fraction of the delay obtained at the LP optimal risk. The resulting curve can be seen as the Pareto frontier described in Section \ref{sect:gametheory} restricted to the case where the ratio of $\psi_\theta$ fixed across $\theta$.  As expected, less stringent risk requirements result in lower average delay.


\subsection{Scalability}
\begin{figure}
    \centering
    \includegraphics[width=\hsize]{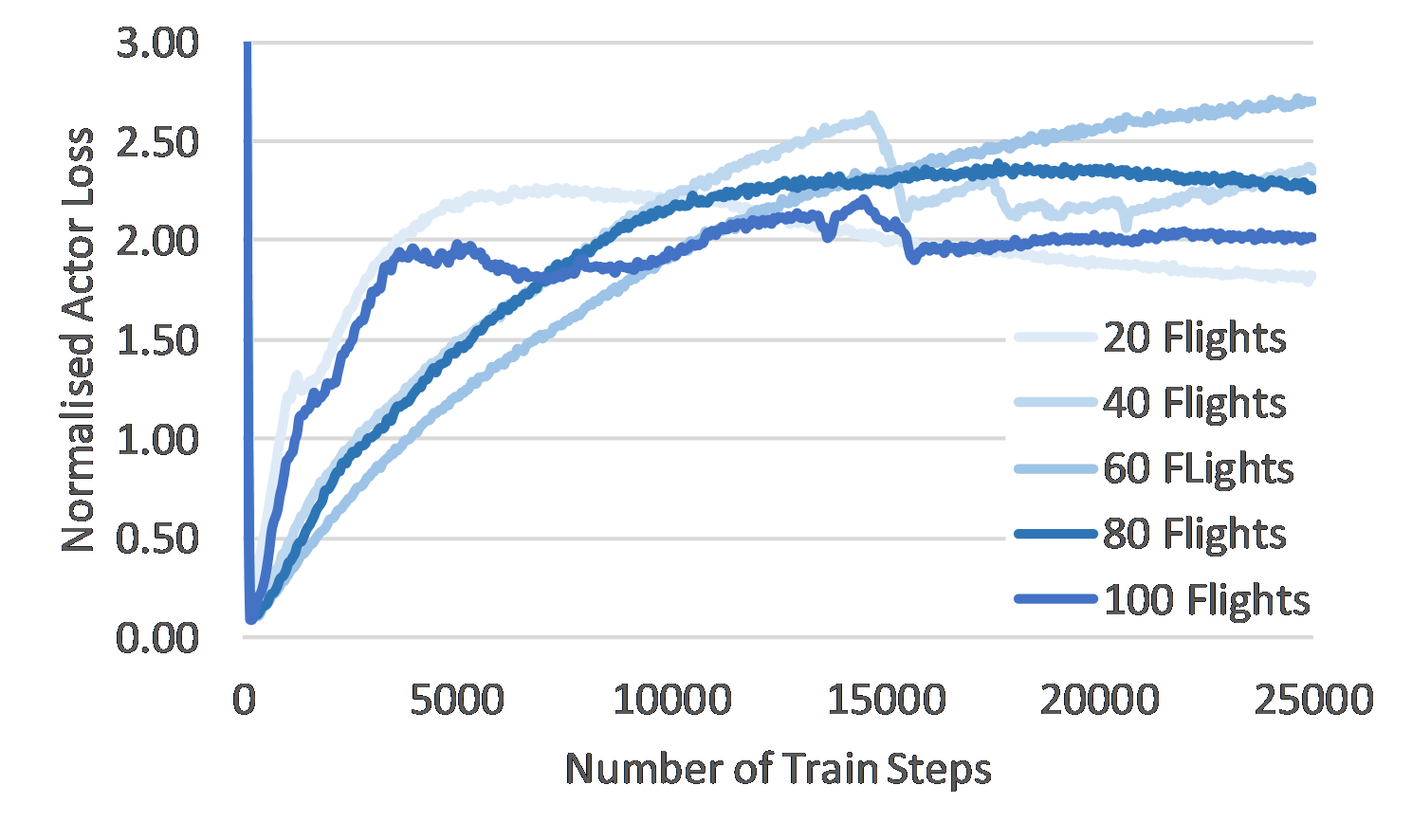}
    \caption{\textbf{Scalability in training steps:} We measure the actor loss of our DDPG network as a function of the number of mini-batches that are fed to it as input.}
    \label{fig:scalevstime}
\end{figure}

\begin{figure}
    \centering
    \includegraphics[width=\hsize]{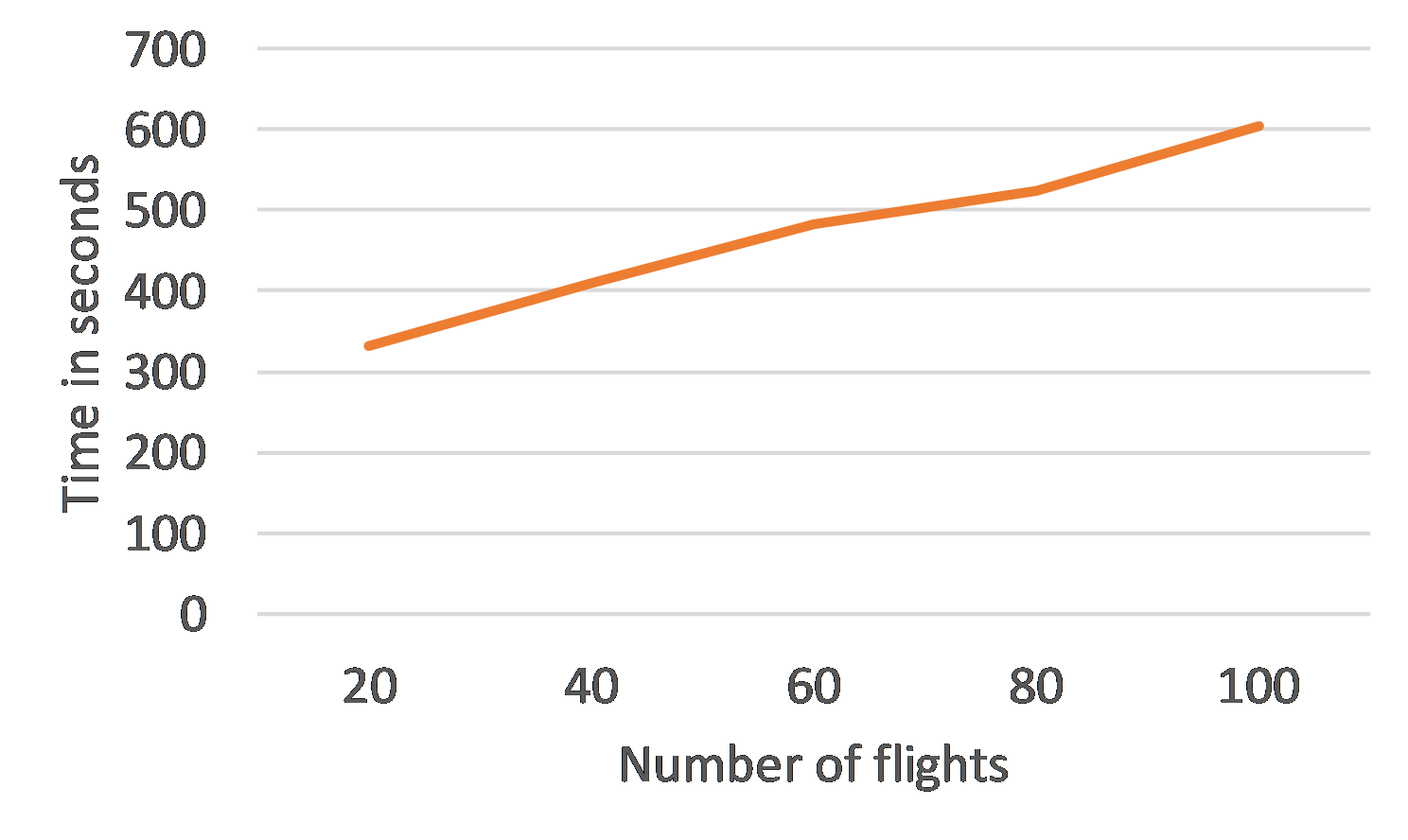}
    \caption{\textbf{Scalability in actual time:} We measure the time in secs at 10,000 training steps for different number of flights.}
    \label{fig:actualtime}
\end{figure}
In Figure \ref{fig:scalevstime}, we show how training time is affected by the size of the game instance. Given that neural networks are not guaranteed optimal, measuring this is slightly challenging but we use the metric of the time at which our DDPG's actor network converges to measure how long the training time takes. As we see in the figure, the number of steps to convergence is about the same, regardless of the number of flights. Based on Figure~\ref{fig:scalevstime}, we choose 10,000 training steps as the number of steps for convergence.

Of course, as the input size increases with increasing number of flights, the time taken per training step increases, thus, the actual wall clock time to get to 10,000 steps for different number of flights varies. Figure~\ref{fig:actualtime} show the actual wall clock time for convergence (to 10,000 steps) with varying number of flights. The increase appears linear showing the scalability of our approach (as a reminder these results are not even using GPUs). In contrast, past work~\cite{brown2016one,mccarthy2017staying} scale highly non-linearly with number of flights and have shown solutions only up to 50 flight and 15 flights respectively. 

\subsection{Delay vs. Variance} \label{sec:var}
\begin{figure}
    \centering
    \includegraphics[width=\hsize]{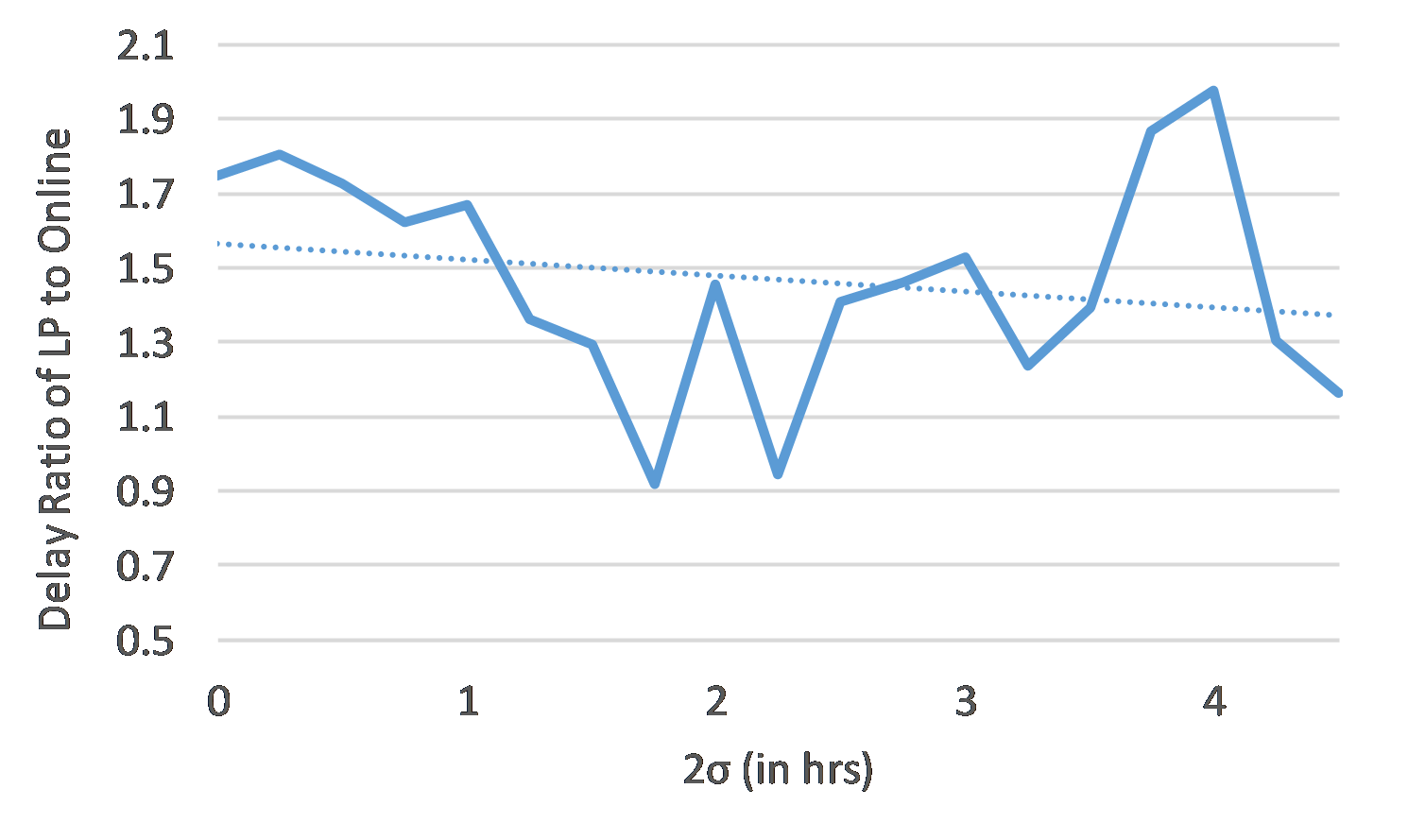}
    \caption{\textbf{Delay vs. Variance:} We measure the ratio of delays for policies computed by LP to our approach across different standard deviation for our arrival distribution.}
    \label{fig:delayvsvariance}
\end{figure}
In Figure \ref{fig:delayvsvariance}, we look at how the variance associated with the passenger arrival distribution affects our gain over the time window based solutions. Here the x-axis is measured using 2$\sigma$ because of the intuition that 95\% of passengers of a flight arrive within a 2 standard deviation window around the mean. This graph can be interpreted as the effect that changing the width of the arrival window (of 95\% passengers) has on solution quality. We vary it from 0 to 5 hours.

We find that as the variance associated with arrivals increases, the gain obtained by using an online approach as ours decreases. We believe that this is because the amount of structure present in the problem decreases as the variance increases. In the limit, when the variance is infinity, the arrivals are uniformly distributed, memory-less, and resemble a Poisson process. As a result, there is no information to be gained whenever the next passenger arrives, hence the per-passenger adaptive solution would do as good as one per-time window adaptive solution in this limiting case. Conversely, if the same number of flights arrive over a longer duration, say 24 hours, our algorithm would do considerably better in terms of the average delay since the arrival windows are less likely to overlap, resulting in smaller overall variance.

\section{Conclusion}
In summary, we proposed a novel model for threat screening that captures inherent features of the problem such as continuous arrival of screenees. We then provided an RL-based method to solve the model which includes the novel $\alpha$-projection method for imposing hard constraints on actions. We believe these advances make our approach for threat screening realistic and applicable in practice.

\subsection{Acknowledgement}
This research was supported by the Singapore Ministry of
Education Academic Research Fund (AcRF) Tier 2 grant MOE2016-T2-1-174 and Ministry of
Education Academic Research Fund (AcRF) Tier 1 grant 19-C220-SMU-011.

\bibliographystyle{aaai}
\bibliography{references}

\end{document}